\newcommand{\algmargin}{\the\ALG@thistlm}
\newlength{\whilewidth}
\algnewcommand{\parState}[1]{\State%
  \parbox[t]{\dimexpr\linewidth-\algmargin}{\strut #1\strut}}
\let\legacytodo\todo
\newcommand{\ruggedtodo}[2][]{\tikzexternaldisable\legacytodo[#1]{#2}\tikzexternalenable}
\renewcommand{\todo}[1]{\ruggedtodo[inline]{#1}}
\def\todoref{\@ifnextchar[{\todoref@with}{\todoref@without}}
\def\todoref@without{\textbf{\color{red} [reference needed]}\xspace}
\def\todoref@with[#1]{\textbf{\color{red} [reference needed: #1]}\xspace}
\newacronym{cbs}{CBS}{current best solution}
\newacronym{cbv}{CBV}{current best value}
\newacronym{los}{LOS}{line-of-sight}
\newacronym{leo}{LEO}{low earth orbit}
\newacronym{iot}{IoT}{internet of things}
\newacronym{irs}{IRS}{intelligent reflecting surface}
\newacronym{socp}{SOCP}{second-order cone program}
\newacronym{soc}{SOC}{second-order cone}
\newacronym{dsl}{DSL}{digital subscriber line}
\newacronym{wsee}{WSEE}{weighted sum energy efficiency}
\newacronym{wsr}{WSR}{weighted sum rate}
\newacronym{mmwave}{mmWave}{millimeter wave}
\newacronym{dfg}{DFG}{Deutsche Forschungsgemeinschaft}
\newacronym{haec}{HAEC}{Highly Adaptive Energy-Efficient Computing}
\newacronym{hpc}{HPC}{High Performance Computing}
\newacronym{mac}{MAC}{multiple-access channel}
\newacronym{bc}{BC}{broadcast channel}
\newacronym{siso}{SISO}{single-input single-output}
\newacronym{simo}{SIMO}{single-input multiple-output}
\newacronym{miso}{MISO}{multiple-input single-output}
\newacronym{mimo}{MIMO}{multiple-input multiple-output}
\newacronym{af}{AF}{amplify-and-forward}
\newacronym{df}{DF}{decode-and-forward}
\newacronym{cf}{CF}{compress-and-forward}
\newacronym{mwrc}{MWRC}{multi-way relay channel}
\newacronym{dmmwrc}{DM-MWRC}{discrete memoryless multi-way relay channel}
\newacronym{pde}{PDE}{partial data exchange}
\newacronym{fde}{FDE}{full data exchange}
\newacronym{iid}{i.i.d.\@}{independent and identically distributed}
\newacronym{di}{DI} {difference of increasing}
\newacronym{dc}{DC}{difference of convex}
\newacronym{mm}{MM}{mixed monotonic}
\newacronym{mmp}{MMP}{mixed monotonic programming}
\newacronym{awgn}{AWGN}{additive white Gaussian noise}
\newacronym{wgn}{WGN}{white Gaussian noise}
\newacronym{awg}{AWG}{additive white Gaussian}
\newacronym{sic}{SIC}{successive interference cancellation}
\newacronym{snr}{SNR}{signal-to-noise ratio}
\newacronym{sinr}{SINR}{signal to interference plus noise ratio}
\newacronym{inr}{INR}{interference to noise ratio}
\newacronym{zf}{ZF}{zero-forcing}
\newacronym{mrt}{MRT}{maximum ratio transmission}
\newacronym{mmse}{MMSE}{minimum mean square error}
\newacronym{sud}{SUD}{single user decoding}
\newacronym{dof}{DoF}{degrees of freedom}
\newacronym{gdof}{GDoF}{generalized degrees of freedom}
\newacronym{nnc}{NNC}{noisy network coding}
\newacronym{dmn}{DMN}{discrete memoryless network}
\newacronym{csi}{CSI}{channel state information}
\newacronym{pmf}{pmf}{probability mass function}
\newacronym{dmic}{DM-IC}{discrete memoryless interference channel}
\newacronym{ic}{IC}{interference channel}
\newacronym{gic}{GIC}{Gaussian interference channel}
\newacronym{if}{IF}{interference}
\newacronym{ee}{EE}{energy efficiency}
\newacronym{gee}{GEE}{global energy efficiency}
\newacronym{tin}{TIN}{treating interference as noise}
\newacronym{snd}{SND}{simultaneous non-unique decoding}
\newacronym{sd}{SD}{simultaneous decoding}
\newacronym{hk}{HK}{Han-Kobayashi}
\newacronym{rs}{RS}{rate splitting}
\newacronym{rf}{RF}{radio frequency}
\newacronym{pa}{PA}{power amplifier}
\newacronym{lna}{LNA}{low noise amplifier}
\newacronym{lo}{LO}{local oscillator}
\newacronym{adc}{ADC}{analog-to-digital converter}
\newacronym{dac}{DAC}{digital-to-analog converter}
\newacronym{dsp}{DSP}{digital signal processing}
\newacronym{brd}{BRD}{best response dynamics}
\newacronym{br}{BR}{best response}
\newacronym{ne}{NE}{Nash equilibrium}
\newacronym{lhs}{LHS}{left-hand side}
\newacronym{rhs}{RHS}{right-hand side}
\newacronym{ran}{RAN}{radio access network}
\newacronym{qos}{QoS}{Quality of Service}
\newacronym{ngmn}{NGMN}{Next Generation Mobile Networks}
\newacronym{cap}{CAP}{Capacity Adaptation}
\newacronym{bwa}{BW}{Bandwidth Adaptation}
\newacronym{prb}{PRB}{physical resource block}
\newacronym{se}{SE}{spectral efficiency}
\newacronym{tp}{TP}{throughput}
\newacronym{bs}{BS}{base station}
\newacronym{ue}{UE}{user equipment}
\newacronym{mop}{MOP}{multi-objective optimization problem}
\newacronym{gda}{GDA}{generalized Dinkelbach's algorithm}
\newacronym{midcp}{MIDCP}{mixed integer disciplined convex programming}
\newacronym{lp}{LP}{linear program}
\newacronym{brb}{BRB}{branch reduce and bound}
\newacronym{bb}{BB}{branch and bound}
\newacronym{sit}{SIT}{successive incumbent transcending}
\newacronym{oma}{OMA}{orthogonal multiple access}
\newacronym{noma}{NOMA}{non-orthogonal multiple access}
\newacronym{rsma}{RSMA}{rate splitting multiple access}
\newacronym{sdma}{SDMA}{space division multiple access}
\newacronym{wlog}{w.l.o.g.\@}{without loss of generality}
\newacronym{lsc}{l.s.c.\@}{lower semi-continuous}
\newacronym{usc}{u.s.c.\@}{upper semi-continuous}
\newacronym{kkt}{KKT}{Karush-Kuhn-Tucker}
\newacronym{ptp}{PTP}{point-to-point}
\newacronym{ian}{IAN}{treating interference as noise}
\newcommand\transformxdimension[1]{
    \pgfmathparse{((#1/\pgfplots@x@veclength)+\pgfplots@data@scale@trafo@SHIFT@x)/10^\pgfplots@data@scale@trafo@EXPONENT@x}
}
\newcommand\transformydimension[1]{
    \pgfmathparse{((#1/\pgfplots@y@veclength)+\pgfplots@data@scale@trafo@SHIFT@y)/10^\pgfplots@data@scale@trafo@EXPONENT@y}
}
\crefname{equation}{}{}
\DeclareMathOperator*\argmax{arg\,max}
\DeclareMathOperator*\argmin{arg\,min}
\DeclareMathOperator\mod{mod}
\DeclareMathOperator\proj{proj}
\newcommand{\st}{\mathrm{s.\,t.}}
\let\vec\bm
\newcommand{\ubar}[1]{\underaccent{\bar}{#1}}
\DeclareFontFamily{U}{mathx}{\hyphenchar\font45}
\DeclareFontShape{U}{mathx}{m}{n}{
      <5> <6> <7> <8> <9> <10>
      <10.95> <12> <14.4> <17.28> <20.74> <24.88>
      mathx10
      }{}
\DeclareSymbolFont{mathx}{U}{mathx}{m}{n}
\DeclareMathSymbol{\bigtimes}{1}{mathx}{"91}
\newtheorem{theorem}{Theorem}
\newtheorem{lemma}{Lemma}
\newtheorem{proposition}{Proposition}
\newcommand{\xqed}[1]{%
	\leavevmode\unskip\penalty9999 \hbox{}\nobreak\hfill
	\quad\hbox{\ensuremath{#1}}}
\newtheorem{XXXassumption}{Case}
\crefname{XXXassumption}{Case}{Cases}
\newtheorem{XXXexample}{Example}
\newcounter{optimizationproblem}
\newenvironment{optprob*}{\begin{equation*}\left\{\begin{aligned}}{\end{aligned}\right.\end{equation*}\ignorespacesafterend}
\newcommand{\subalign}[1]{%
  \vcenter{%
    \Let@ \restore@math@cr \default@tag
    \baselineskip\fontdimen10 \scriptfont\tw@
    \advance\baselineskip\fontdimen12 \scriptfont\tw@
    \lineskip\thr@@\fontdimen8 \scriptfont\thr@@
    \lineskiplimit\lineskip
    \ialign{\hfil$\m@th\scriptstyle##$&$\m@th\scriptstyle{}##$\crcr
      #1\crcr
    }%
  }
}
\title{Globally Optimal Beamforming for Rate Splitting Multiple Access}
\name{Bho Matthiesen$^{\star}$, Yijie Mao$^{\dagger}$, Petar Popovski$^{\ddagger \star}$, and Bruno Clerckx$^{\dagger}$\vspace{-\baselineskip}%
\thanks{This work is supported in part by the German Research Foundation (DFG) under grant EXC 2077 (University Allowance),  by the U.K. Engineering and Physical Sciences Research Council (EPSRC) under grants EP/N015312/1 and EP/R511547/1, and by the North-German Supercomputing Alliance (HLRN). }}
\address{\normalsize $^{\star}$ University of Bremen, Deptartment of Communications Engineering, Otto-Hahn-Allee 1, 28359 Bremen, Germany\\\normalsize $^{\dagger}$ Imperial College London, Deptartment of Electical and Electronic Engineering, London, United Kingdom\\\normalsize $^{\ddagger}$  Aalborg University, Department of Electronic Systems, 9220 Aalborg, Denmark}
\begin{document}
\renewcommand{\crefrangeconjunction}{--}
\renewcommand{\creflastconjunction}{, }
\ninept
\maketitle
\begin{abstract}
	We consider globally optimal precoder design for rate splitting multiple access in Gaussian multiple-input single-output downlink channels with respect to weighted sum rate and energy efficiency maximization. The proposed algorithm solves an instance of the joint multicast and unicast beamforming problem and includes multicast- and unicast-only beamforming as special cases. Numerical results show that it outperforms state-of-the-art algorithms in terms of numerical stability and converges almost twice as fast.
\end{abstract}

\begin{keywords}
	rate splitting, global optimization, resource allocation, energy efficiency, interference networks
\end{keywords}
  \setlength\abovedisplayskip{4.5pt plus 2.0pt minus 3.0pt}
  \setlength\belowdisplayskip{4.5pt plus 2.0pt minus 3.0pt}
  \setlength\abovedisplayshortskip{0pt plus 2.0pt}
  \setlength\belowdisplayshortskip{3.0pt plus 2.0pt minus 2.0pt}

\section{Introduction}
\cGls{rsma} is a powerful non-orthogonal transmission and robust interference management strategy for beyond 5G communication networks \cite{RSintro16bruno,mao2017eurasip,mao2019beyondDPC}. The key idea is to split each message into common and private parts and transmit them by superposition coding \cite{Han1981}. The common message is decoded by multiple users, while the private message is only decoded by the corresponding user employing \cgls{sic}. This approach allows arbitrary combinations of joint decoding and treating interference as noise by flexibly adjusting the message split.
Recent results show that \cgls{rsma} outperforms existing multiple access schemes such as \cgls{sdma}, power-domain \cgls{noma}, \cgls{oma}, and multicasting in terms of \cgls{wsr} \cite{mao2017eurasip, bruno2019wcl, mao2019TCOM} and \cgls{ee} \cite{mao2018EE, mao2019TCOM}.

This paper treats the important question of downlink \cgls{miso} beamforming for \cgls{rsma} 
with respect to \cgls{wsr} and \cgls{ee} maximization.
The corresponding optimization problem is related to joint multicast and unicast precoding that is known to be NP-hard \cite{Sidiropoulos2006,Luo2008}. Existing works on \cgls{rsma}  focus on suboptimal strategies to obtain computationally tractable algorithms \cite{hamdi2016tcom,mao2017eurasip,mao2018EE,mao2019TCOM,mao2019maxmin,Li2020a,Fu2020}. While several globally optimal algorithms for unicast beamforming \cite{Bjornson2013,Tervo2015} and multicast beamforming \cite{Lu2017} exist, joint solution methods are scarce. In particular, the procedure in \cite{Liu2017} solves the power minimization problem and \cite{Chen2018} maximizes the \cgls{wsr} for joint multicast and unicast beamforming.
All these methods are based on \cgls{bb} in combination with the \cgls{soc} transformation in \cite{Bengtsson1999}. However, as this transformation moves the complexity into the feasible set, pure \cgls{bb} methods are prone to numerical problems, see \cref{sec:alg}.
Instead, in this paper we design a \cgls{sit} \cgls{bb} algorithm to solve this beamforming problem with improved numerical stability and faster convergence. To the best of the authors knowledge, this is the first globally optimal solution algorithm for an instance of the joint unicast and multicast problem with respect to \cgls{ee} maximization. It is also the first global optimization method specifically targeted at \cgls{rsma}.

\section{System Model \& Problem Statement}
Consider the downlink in a wireless network where an $M$ antenna \cgls{bs} serves $K$ single-antenna users. The received signal at user $k$, $k\in\mathcal K = \{ 1, \dots, K \}$, for each channel use is
	$y_k=\vec{h}_k^{H}\vec{x}+n_k$,
where the transmit signal $\vec x\in\mathds{C}^{M\times 1}$ is subject to an average power constraint $P$, $\vec h_k$ is the complex-valued channel from the \cgls{bs} to user $k$, and $n_k$ is circularly symmetric complex white Gaussian noise with unit power at user $k$.

The transmitter employs 1-layer rate splitting \cite{mao2017eurasip,hamdi2016tcom}, i.e., it splits the message $W_k$ intended for user $k$ into a common part $W_{c,k}$ and a private part $W_{p,k}$. Then, the common messages are combined into a single message $W_c$ and these $K+1$ messages are encoded with independent Gaussian codebooks into $s_c, s_1, \dots, s_K$, each having unit power. These symbols are combined with linear precoding into the transmit signal $\vec x = \vec p_c s_c + \sum_{k\in\mathcal K} \vec p_k s_k$. The \cgls{bs} is subject to an average power constraint, i.e., $\Vert\vec p_c\Vert^2 + \sum_{k\in\mathcal K} \Vert\vec p_k \Vert^2 \le P$.

Each receiver uses \cgls{sic} to first recover $s_c$ and then $s_k$, treating all other messages as noise. Asymptotic error free decoding of $W_c$ and $W_{p,k}$ is possible if the rates of these messages satisfy $R_c \le \log(1 + \gamma_{c,k})$ and $R_{p,k} \le \log(1+\gamma_{p,k})$, with \cglspl{sinr}
\begin{equation} \label{eq: SINR cp}
	\gamma_{c,k}=\frac{|{\vec{h}}_{k}^{H}\vec{p}_{c}|^2}{\sum\nolimits_{j\in\mathcal{K}}|\vec{h}_{k}^{H}\vec{p}_{j}|^2+1},
	\enskip
	\gamma_{p,k}=\frac{|{\vec{h}}_{k}^{H}\vec{p}_{k}|^2}{\sum\nolimits_{j\in\mathcal{K}\setminus k}|\vec{h}_{k}^{H}\vec{p}_{j}|^2+1}\mathclap{.}
\end{equation}
The rate $R_{c} $ is shared across the users, where
user $k$  is allocated a portion ${C}_{k}$ corresponding to the rate of $W_{c,k}$
such that $\sum_{k \in \mathcal{K}} {C}_{k} = {R}_{c} $. Then, the total rate of user $k$ is $R_k = C_k + R_{p,k}$.

Observe that this system model includes multi-user linear precoding and multicast beamforming
as special cases. 

\subsection{Problem Statement} \label{sec:transf}
We consider the following resource allocation problem under minimum rate $R_k^{th}$ quality of service constraints
\begin{subequations} \label{eq:srmaxequiv}
	\begin{align}
		\max_{\substack{\vec p_1, \dots, \vec p_K,\\ \vec p_c, \mathbf{c}, \vec \gamma_c, \vec \gamma_p}}\quad
		&\frac{\sum_{k\in\mathcal{K}} u_k \left(C_k+\log(1+ \gamma_{p,k})\right)}{\mu \left( \Vert \vec p_c \Vert^2 + \sum_{k\in\mathcal K} \Vert \vec p_k \Vert^2 \right) + P_c}\\
	\mbox{s.t.}\quad
	& \gamma_{c,k} \text{ and } \gamma_{p,k} \text{ as in \eqref{eq: SINR cp}}\label{eq:srmaxequiv:2} \\
	& \sum\nolimits_{k'\in \mathcal{K}}C_{k'}\leq \log(1+ \gamma_{c,k}), \forall k\in\mathcal{K} \label{eq:srmaxequiv:3} \\
	& C_k \geq \max \left\{ 0,\ R_k^{th} - \log(1+ \gamma_{p,k}) \right\}, \forall k\in\mathcal{K} \label{eq:srmaxeqiv:4} \\
	& \Vert \vec p_c \Vert^2 + \sum\nolimits_{k\in\mathcal K} \Vert \vec p_k \Vert^2 \le P \label{eq:srmaxequiv:6}
	\end{align}
\end{subequations}
with nonnegative weight vector $\mathbf{u}=[u_1,\ldots,u_K] \neq \vec 0$, nonnegative power amplifier inefficiency $\mu$, and positive static circuit power consumption $P_c$. This problem has two operational meanings: With unit weights, it maximizes the \cgls{ee} and, with $\mu = 0$, $P_c = 1$, it maximizes the \cgls{wsr}.

The following problem is equivalent to \cref{eq:srmaxequiv} and will be solved by the developed algorithm:
\begin{subequations} \label{eq:srmax}
	\begin{align}
		\max_{\substack{\vec p_c, \vec p_1, \dots, \vec p_K,\\ \mathbf{c}, \vec \gamma_p, s, \vec d, \vec e}}{} &  \frac{\sum_{k\in\mathcal{K}} u_k \left(C_k+\log(1+ \gamma_{p,k})\right)}{\mu \left( \Vert \vec p_c \Vert^2 + \sum_{k\in\mathcal K} \Vert \vec p_k \Vert^2 \right) + P_c}\\
	\mbox{s.t.}\quad
	& \sqrt{\gamma_{p,k}} \smash{\left( \sum\nolimits_{j\in\mathcal{K}\setminus k}|\mathbf{h}_{k}^{H}\mathbf{p}_{j}|^2+1 \right)^{1/2}}\!\!\! \le {\mathbf{h}}_{k}^{H}\mathbf{p}_{k}  \label{eq:srmax:1} \\
	& \sqrt s \left( \sum\nolimits_{j\in\mathcal{K}}|\mathbf{h}_{1}^{H}\mathbf{p}_{j}|^2+1 \right)^{1/2} \le {\mathbf{h}}_{1}^{H}\mathbf{p}_{c}  \label{eq:srmax:2} \\
	& \sqrt s \left( \sum\nolimits_{j\in\mathcal{K}}|\mathbf{h}_{k}^{H}\mathbf{p}_{j}|^2+1 \right)^{1/2} \le d_k, \forall k > 1 \label{eq:srmax:3} \\
	& (e_k, d_k) \in \mathcal C, \forall k > 1 \label{eq:srmax:8} \\
	& \Re\{\vec h_k^H \vec p_k\} \ge 0,\quad \Im\{\vec h_k^H \vec p_k\} = 0 \label{eq:srmax:4} \\
	& \Re\{\vec h_1^H \vec p_c\} \ge 0,\quad \Im\{\vec h_1^H \vec p_c\} = 0 \label{eq:srmax:5} \\
	& \forall k > 1: d_k \ge 0,\enskip e_k = {\mathbf{h}}_{k}^{H}\mathbf{p}_{c}\label{eq:srmax:7} \\
	& \sum\nolimits_{k\in \mathcal{K}}C_{k}\leq \log(1+ s) \label{eq:srmax:9} \\
	& \eqref{eq:srmaxeqiv:4} \text{ and } \eqref{eq:srmaxequiv:6} \label{eq:srmax:10}
	\end{align}
\end{subequations}
\begin{flalign} \label{eq:circleset}
	\text{with} && (e, d) \in \mathcal C = \{ e\in\mathds C, d \in \mathds R : d  \le | e | \}. &&
\end{flalign}
A crucial observation is that this problem is a \cgls{socp} for fixed $s$, $\vec\gamma_p$, except for constraint \cref{eq:srmax:7}. Hence, the nonconvexity of \cref{eq:srmaxequiv} is only due to the \cgls{sinr} expressions and not due to the beamforming vectors. We will exploit this partial convexity in the final algorithm to limit the numerical complexity.

\begin{proposition} \label{prop:srmax}
	Problems~\cref{eq:srmaxequiv,eq:srmax} have the same optimal value and every solution of \cref{eq:srmax} also solves \cref{eq:srmaxequiv}.
\end{proposition}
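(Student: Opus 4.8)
The plan is to prove both assertions simultaneously by exhibiting a correspondence between feasible points of the two problems that preserves the objective, exploiting that \eqref{eq:srmaxequiv} and \eqref{eq:srmax} share the objective and the variables $\vec p_c,\vec p_1,\dots,\vec p_K,\mathbf c,\vec\gamma_p$, and that the objective depends on the precoders only through $\mathbf c$ and $\vec\gamma_p$. Two structural observations drive the argument. First, each SINR in \eqref{eq: SINR cp} and the power budget \eqref{eq:srmaxequiv:6} depend on $\vec p_c,\vec p_k$ only through the magnitudes $|\vec h_j^H\vec p_k|$, $|\vec h_j^H\vec p_c|$ and the norms, so multiplying any $\vec p_k$ or $\vec p_c$ by a unit-modulus phase alters neither the objective nor the feasibility of \eqref{eq:srmaxequiv}; choosing these phases to render $\vec h_k^H\vec p_k$ and $\vec h_1^H\vec p_c$ real and nonnegative realizes \eqref{eq:srmax:4} and \eqref{eq:srmax:5}. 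Second, since $u_k\ge0$ the objective is nondecreasing in each $\gamma_{p,k}$, and enlarging $\gamma_{p,k}$ or $\gamma_{c,k}$ only relaxes \eqref{eq:srmaxeqiv:4} and \eqref{eq:srmaxequiv:3}; hence nothing is lost by treating the SINR relations as inequalities ``variable $\le$ true SINR'' rather than equalities, because any slack can be absorbed by raising the SINR variables to their actual values.

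For the inclusion \eqref{eq:srmaxequiv} $\Rightarrow$ \eqref{eq:srmax}, take a feasible point of \eqref{eq:srmaxequiv} and rotate the phases as above. Put $s=\min_k\gamma_{c,k}$, so that \eqref{eq:srmaxequiv:3} gives $\sum_kC_k\le\log(1+s)$, i.e.\ \eqref{eq:srmax:9}. A constraint $\gamma\le|\vec h^H\vec p|^2/b$, where $b>0$ denotes the matching interference-plus-noise term and $\vec h^H\vec p\ge0$ is real, is equivalent to the second-order-cone inequality $\sqrt{\gamma}\,b^{1/2}\le\vec h^H\vec p$; applying this to the private SINRs yields \eqref{eq:srmax:1} and to the common SINR at user $1$ yields \eqref{eq:srmax:2}. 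For $k>1$ the phase of $\vec p_c$ is already fixed, so I set $e_k=\vec h_k^H\vec p_c$ and $d_k=|e_k|\ge0$; then $s\le\gamma_{c,k}$ is encoded by $\sqrt{s}\,b_k^{1/2}\le d_k\le|e_k|$, which is precisely \eqref{eq:srmax:3}, \eqref{eq:srmax:8} and \eqref{eq:srmax:7}. The remaining constraints \eqref{eq:srmax:10} are inherited unchanged, producing a feasible point of \eqref{eq:srmax} with the same objective value, so the optimum of \eqref{eq:srmax} is at least that of \eqref{eq:srmaxequiv}.

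For the converse, take a feasible point of \eqref{eq:srmax} and assign to \eqref{eq:srmaxequiv} the true SINRs of its precoders. Constraints \eqref{eq:srmax:1} and \eqref{eq:srmax:4} show the variable $\gamma_{p,k}$ does not exceed the true private SINR, while \eqref{eq:srmax:2}, \eqref{eq:srmax:5} give $s\le\gamma_{c,1}$ and the chain $\sqrt{s}\,b_k^{1/2}\le d_k\le|e_k|=|\vec h_k^H\vec p_c|$ from \eqref{eq:srmax:3}, \eqref{eq:srmax:8}, \eqref{eq:srmax:7} gives $s\le\gamma_{c,k}$ for $k>1$; hence \eqref{eq:srmax:9} implies \eqref{eq:srmaxequiv:3}. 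By the monotonicity noted above, the resulting point is feasible for \eqref{eq:srmaxequiv} with objective no smaller than that of the original \eqref{eq:srmax} point, so the two optima coincide. Because this last step can only increase the objective, a maximizer of \eqref{eq:srmax} is sent to a feasible point of \eqref{eq:srmaxequiv} whose value already equals the common optimum, hence a maximizer of \eqref{eq:srmaxequiv}, which establishes the solution claim. I expect the main difficulty to lie in the asymmetric treatment of the common stream: verifying that the nonconvex lift $(e_k,d_k)\in\mathcal C$ of \eqref{eq:circleset} faithfully represents $s\le\gamma_{c,k}$ for $k>1$ whereas a single phase alignment suffices for $k=1$, and in checking that the inequality relaxations are attained at an optimizer rather than merely feasible.
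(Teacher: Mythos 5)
Your proposal is correct and follows essentially the same route the paper indicates for its (omitted) proof: the second-order-cone reformulation of the SINR constraints via phase rotation of the precoders, together with the auxiliary pair $(e_k,d_k)\in\mathcal C$ lifting the multicast SINRs for $k>1$, and the monotonicity argument showing that relaxing the SINR equalities to inequalities loses nothing. The two-sided feasibility correspondence and the final optimality-transfer step are all sound.
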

\begin{proof}
	Omitted due to space constraints. Use the \cgls{soc} reformulation from \cite{Bengtsson1999} for the \cglspl{sinr}, with additional auxiliary variables for the multicast beamformer $\vec p_c$ \cite{Lu2017}.
\end{proof}

\section{Globally Optimal Beamforming} \label{sec:alg}
Problem~\eqref{eq:srmax} is an NP-hard nonconvex optimization problem due to the multicast beamforming \cite{Sidiropoulos2006} and the power allocation in the private messages \cite{Luo2008}. Previous global optimization algorithms for similar problems rely on \cgls{bb} procedures with \cgls{socp} bounding \cite{Bjornson2013,Tervo2015,Liu2017,Chen2018}. However, this either leads to an infinite algorithm or requires the additional solution of several \cglspl{socp} to obtain a feasible point in each iteration \cite{Bjornson2013} which is required to obtain a finite algorithm. Moreover, the auxiliary \cgls{socp} that is solved in every iteration of the \cgls{bb} procedure is numerically challenging and leads to problems even with commercial state-of-the-art solvers like Mosek \cite{mosek}. This can be alleviated by the modified auxiliary problem in \cite[\S 2.2.2]{Bjornson2013} but this approach greatly increases convergence times. Instead, we design an algorithm based on the \cgls{sit} scheme \cite{Tuy2005a,Tuy2009,sit,diss} and combine it with a \cgls{brb} procedure. The resulting algorithm is numerically stable, has proven finite convergence, also solves \cgls{ee} maximization, and is the first global optimization algorithm specifically designed for \cgls{rsma}. Practically, it outperforms algorithms for similar problems as will be verified in \cref{sec:numeval}.

To better illustrate the core principles of \cgls{sit}, consider the general optimization problem
\begin{equation} \label{eq:genopt}
	\max\nolimits_{(\vec x, \vec \xi)\in\mathcal D}\enskip f(\vec x, \vec\xi) \quad\st\quad g_i(\vec x, \vec\xi) \le 0,\enskip i = 1, \dots, n
\end{equation}
with continuous, real valued functions $f, g_1, \dots, g_n$ and nonempty feasible set. Further, assume that $f$ is concave,\footnote{Although this assumption does not hold for \cref{eq:srmax}, the approach is still applicable since the sole purpose of this assumption is to obtain a convex feasible set in \cref{eq:gendual}.} $g_1, \dots, g_n$ are convex in $\vec\xi$ for fixed $\vec x$, and $\mathcal D$ is a closed convex set. Depending on the structure of $g_1, \dots, g_n$ in $\vec x$, this problem might be quite hard to solve for \cgls{bb} methods \cite{Tuy2016,sit}.\footnote{This is also true for outer approximation methods \cite{Tuy2016}.}
Instead, consider the problem
\begin{equation} \label{eq:gendual}
\min\nolimits_{(\vec x, \vec\xi)\in\mathcal D}\enskip \max\nolimits_i\{ g_i(\vec x, \vec\xi) \} \quad\st\quad f(\vec x, \vec\xi)  \ge \delta
\end{equation}
that is obtained from \cref{eq:genopt} by exchanging the objective and constraints.
If the optimal value of \cref{eq:gendual} is less than or equal to zero, the optimal value of \cref{eq:genopt} is greater than or equal to $\delta$. Instead, if the optimal value of \cref{eq:gendual} is greater than zero, the optimal value of \cref{eq:genopt} is less than $\delta$ \cite[Prop.~7]{Tuy2009}.
Hence, the optimal solution of \cref{eq:genopt} can be obtained by solving a sequence of \cref{eq:gendual} with increasing $\delta$.
Since the feasible set of \cref{eq:gendual} is closed and convex, it can be solved much easier by \cgls{brb} than \cref{eq:genopt} \cite{Tuy2009}.

The \cgls{sit} and \cgls{brb} procedures can be integrated into a single \cgls{brb} algorithm that solves \cref{eq:gendual} with low precision and updates $\delta$ whenever a point $\vec x^k$ feasible in \cref{eq:genopt} is encountered that achieves an objective value $f(\vec x^k) > \delta$.
This \cgls{brb} procedure relaxes the feasible set and subsequently partitions it in such a way that upper and lower bounds on the minimum objective value of \cref{eq:gendual} can be computed efficiently for each partition element. In particular, we use rectangular subdivision and define the initial box as
$\mathcal M_0 = [\vec r^0, \vec s^0] = \{ \vec x : r^0_i \le x_i \le s^0_i \}$
satisfying $\mathcal M_0\supseteq \proj_{\vec x} \mathcal D$.
The algorithm subsequently partitions the relaxed feasible set $\mathcal M_0$ into smaller boxes and stores the current partition of $\mathcal M_0$ in $\mathscr R_k$.
In iteration $k$, the algorithm selects a box $\mathcal M^k = [\vec r^k, \vec s^k]$ and bisects it into two new subrectangles.
 For each of these new boxes, a lower bound on the objective value is computed using a bounding function $\beta(\mathcal M)$ that computes a lower bound on the objective value of \cref{eq:gendual} with additional constraint $\vec x\in\mathcal M$. If this problem is infeasible, then $\beta(\mathcal M) = \infty$.
 To ensure convergence, the bounding needs to be consistent with branching, i.e., $\beta(\mathcal M)$ has to satisfy
\begin{equation} \label{eq:consistent}
	\beta(\mathcal M) -\ \min_{\mathclap{\substack{(\vec x, \vec\xi)\in\mathcal F,\\ \vec x\in\mathcal M}}}\ \max_i \{ g_i(\vec x, \vec\xi) \}
	\rightarrow 0 \enskip\mathrm{as}\enskip \max_{\vec x, \vec y\in\mathcal M} \Vert \vec x - \vec y \Vert \rightarrow 0,
\end{equation}
and a dual feasible point $\vec x^k\in\proj_{\vec x}\mathcal F\cap\mathcal M_k$ is required,
where $\mathcal F =  \{ \vec x\in \mathcal D : f(\vec x) \ge \delta \}$ is the feasible set of \cref{eq:gendual}.

The following lemma is essential to establish the convergence of the \cgls{sit} procedure. It follows that it can be incorporated in a \cgls{bb} procedure with pruning criterion $\beta(\mathcal M) < -\varepsilon$ and termination criterion $0 > \min\nolimits_{\vec\xi}\; g(\vec x^k, \vec\xi) \enskip\st\enskip (\vec x^k, \vec\xi)\in\mathcal F$.
\begin{lemma}[\kern -.5ex {\cite[Prop.~5.9]{diss}}] \label{lem:conv}
	Let $\varepsilon> 0$ be given and define $g(\vec x, \vec\xi) = \max_i\{g_i(\vec x, \vec\xi)\}$. Either $g(\vec x^k, \vec\xi^*) < 0$ for some $k$ and $(\vec x^k, \vec\xi)\in\mathcal F$, or $\beta(\mathcal M_k) > -\varepsilon$ for some $k$. In the former case, $(\vec x^k, \vec\xi^*)$ is a nonisolated feasible solution of \cref{eq:genopt} satisfying $f(\vec x^k, \vec\xi^*)\ge\delta$. In the latter case, no $\varepsilon$-essential feasible solution $(\vec x, \vec\xi)$ of \cref{eq:genopt} exists such that $f(\vec x, \vec\xi)\ge\delta$.
\end{lemma}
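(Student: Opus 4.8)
The plan is to read the statement as the finite-termination dichotomy of the \cgls{brb} procedure that minimizes $g$ over $\mathcal F$: either an incumbent-transcending point is found (former case) or approximate infeasibility is certified (latter case). To make the termination test precise, I would introduce the value function $\phi(\vec x)=\min_{\vec\xi}\{\,g(\vec x,\vec\xi):(\vec x,\vec\xi)\in\mathcal F\,\}$ together with its minimizer $\vec\xi^*=\vec\xi^*(\vec x)$, so that the termination criterion fires at a dual feasible point $\vec x^k$ exactly when $\phi(\vec x^k)<0$, i.e.\ when $g(\vec x^k,\vec\xi^*)<0$ with $(\vec x^k,\vec\xi^*)\in\mathcal F$. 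The two cases are mutually exclusive by the design of the pruning and termination rules, so the substance of the proof is to show that one of them is reached after finitely many iterations.

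The core is a contradiction argument built on an exhaustive subdivision. Suppose neither case ever occurs. Then the termination criterion never fires, so $\phi(\vec x^k)\ge 0$ for every $k$, and no processed box is pruned, so $\beta(\mathcal M_k)\le-\varepsilon$ for every $k$. Because bisection of the longest edge is an exhaustive rule, some nested subsequence of processed boxes satisfies $\max_{\vec x,\vec y\in\mathcal M_{k_j}}\Vert\vec x-\vec y\Vert\to 0$ and collapses to a single point $\vec x^*$. Invoking the consistency condition \eqref{eq:consistent}, the gap between $\beta(\mathcal M_{k_j})$ and $\min\{\,g(\vec x,\vec\xi):(\vec x,\vec\xi)\in\mathcal F,\ \vec x\in\mathcal M_{k_j}\,\}=\min_{\vec x\in\mathcal M_{k_j}}\phi(\vec x)$ vanishes, and as the boxes shrink this restricted minimum tends to $\phi(\vec x^*)$; hence $\phi(\vec x^*)\le-\varepsilon<0$. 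Since the dual feasible points $\vec x^{k_j}\in\proj_{\vec x}\mathcal F\cap\mathcal M_{k_j}$ also converge to $\vec x^*$, continuity of $\phi$ yields $\phi(\vec x^{k_j})\to\phi(\vec x^*)<0$, so the termination criterion does fire for large $j$ --- contradicting the standing assumption. This establishes the dichotomy.

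It remains to extract the two conclusions. In the former case, $(\vec x^k,\vec\xi^*)\in\mathcal F$ gives $f(\vec x^k,\vec\xi^*)\ge\delta$ directly, while $\phi(\vec x^k)<0$ means $g_i(\vec x^k,\vec\xi^*)<0$ for all $i$; strict satisfaction of every constraint of \eqref{eq:genopt} together with continuity of the $g_i$ and convexity of $\mathcal D$ shows the point lies in the relative interior of the feasible region and is therefore a nonisolated feasible solution. In the latter case, using that the processed box carries the smallest current bound, $\beta(\mathcal M_k)>-\varepsilon$ forces every remaining box to have bound exceeding $-\varepsilon$, so the global lower bound on $\min_{\mathcal F}g$ exceeds $-\varepsilon$; unwinding the definition of an $\varepsilon$-essential feasible solution from \cite{diss} then certifies that no such solution of \eqref{eq:genopt} with $f\ge\delta$ exists.

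The step I expect to be the \emph{main obstacle} is the passage to the limit, namely justifying $\min_{\vec x\in\mathcal M_{k_j}}\phi(\vec x)\to\phi(\vec x^*)$ and $\phi(\vec x^{k_j})\to\phi(\vec x^*)$. This requires lower semicontinuity (in fact continuity) of the parametric inner-minimization value function $\phi$, which I would obtain from a Berge-type maximum-theorem argument using closedness of $\mathcal F$ and continuity of $g$, combined with the exhaustiveness of the longest-edge bisection; the remaining care lies in matching the precise notions of ``nonisolated'' and ``$\varepsilon$-essential'' feasible solution inherited from \cite{diss}.
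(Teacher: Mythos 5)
The paper gives no proof of this lemma---it is quoted from \cite[Prop.~5.9]{diss}---and your reconstruction follows the same standard SIT/branch-and-bound convergence argument used there: a contradiction via exhaustive longest-edge bisection combined with the consistency condition \eqref{eq:consistent}, and the two conclusions unwound from the definitions of nonisolated and $\varepsilon$-essential feasible solutions. Your identification of the continuity (lower/upper semicontinuity) of the inner value function $\phi(\vec x)=\min_{\vec\xi}\{g(\vec x,\vec\xi):(\vec x,\vec\xi)\in\mathcal F\}$ as the technical crux, handled by a Berge-type argument on the closed, bounded feasible set, is accurate, so the proposal is correct and takes essentially the same route as the cited source.
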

Next, we design a suitable bounding procedure that satisfies \cref{eq:consistent}.

\subsection{Bounding Procedure} \label{sec:sit:bounding}
The \cgls{sit} dual should contain all of the problem's nonconvexity in the objective function. Following the discussion in \cref{sec:transf}, the nonconvexity in \cref{eq:srmax} is due to \cref{eq:srmax:1,eq:srmax:2,eq:srmax:3,eq:srmax:8}. We obtain the \cgls{sit} dual as
\begin{subequations} \label{eq:sitdual}
	\begin{align}
		\min_{\substack{\vec p_c, \vec p_1, \dots, \vec p_K,\\ \mathbf{c}, \vec \gamma_p, s, \vec d, \vec e}}\quad & \max\Big[
			\sqrt s \left( \sum\nolimits_{j\in\mathcal{K}}|\mathbf{h}_{1}^{H}\mathbf{p}_{j}|^2+1 \right)^{1/2} - {\mathbf{h}}_{1}^{H}\mathbf{p}_{c}, \notag\\
			&\quad \max_{k > 1} \bigg\{\sqrt s \left( \sum\nolimits_{j\in\mathcal{K}}|\mathbf{h}_{k}^{H}\mathbf{p}_{j}|^2+1 \right)^{1/2}\!\!\!\!\! - d_k \bigg\}, \notag\\
			&\max_{k\in\mathcal K} \bigg\{ \sqrt{\gamma_{p,k}} \left( \sum\nolimits_{j\in\mathcal{K}\setminus k}|\mathbf{h}_{k}^{H}\mathbf{p}_{j}|^2+1 \right)^{1/2}\!\!\!\!\! - {\mathbf{h}}_{k}^{H}\mathbf{p}_{k} \bigg\}, \notag\\
		&\quad\max_{k > 1} \big\{d_k - |e_k| \big\} \quad\Big] \label{eq:sitdual:0}\\
	\mbox{s.t.}\quad
	&   \frac{\sum_{k\in\mathcal{K}} u_k \left(C_k+\log(1+ \gamma_{p,k})\right)}{\mu \left( \Vert \vec p_c \Vert^2 + \sum_{k\in\mathcal K} \Vert \vec p_k \Vert^2 \right) + P_c}\ge \delta  \label{eq:sitdual:1} \\
	& \text{\cref{eq:srmax:4,eq:srmax:5,eq:srmax:7,eq:srmax:9,eq:srmax:10}}. \label{eq:sitdual:2}
	\end{align}
\end{subequations}
Observe that \cref{eq:sitdual:1} is equivalent to the \cgls{soc}
\begin{equation*}
	\sum_{k\in\mathcal{K}} u_k \left(C_k+\log(1+ \gamma_{p,k})\right) \ge \delta \bigg( \mu \bigg( \Vert \vec p_c \Vert^2 + \sum_{k\in\mathcal K} \Vert \vec p_k \Vert^2 \bigg) + P_c \bigg)
\end{equation*}
since the denominator in \cref{eq:sitdual:1} is positive.

A bounding function $\beta(\mathcal M)$ that satisfies \cref{eq:consistent} is required.
First, observe that the objective of \cref{eq:sitdual} is increasing in $(\vec\gamma_p, s)$. Hence, a lower bound on $[\ubar{\vec\gamma}_p, \bar{\vec\gamma}_p] \times [\ubar s, \bar s]$ is obtained by setting $\vec\gamma_p = \ubar{\vec\gamma}_p$ and $s = \ubar s$ in the objective. Next,
smoothen the objective of \cref{eq:sitdual} by using the epigraph form with auxiliary variable $t$, and convert the pointwise maximum expressions to smooth constraints. Then, the new constraints $t \ge d_k - |e_k|$, for $k > 1$, are equivalent to $(e_k, d_k - t)\in\mathcal C$. This set $\mathcal C$ is nonconvex. Consistent bounding of this set is obtained using argument cuts \cite{Lu2017}, i.e., introduce auxiliary variables $\alpha_k \in[0, 2\pi]$, $k > 1$, and add the constraint $\angle e_k = \alpha_k$. The variables $\vec\alpha$ are included in the nonconvex variables handled by the \cgls{brb} solver. Then, a lower bound on the objective value of \cref{eq:sitdual} over the box $[\ubar{\vec\alpha}, \bar{\vec\alpha}]$ is obtained by replacing the constraints $d_k  \le | e_k |,\enskip \angle e_k \in [\ubar{\alpha}_k, \bar\alpha_k]$, with their convex envelope.
For $\bar\alpha_k - \ubar{\alpha}_k \le \pi$, this is
\begin{subequations}
	\label{eq:crelax}
\begin{gather}
	\sin(\ubar\alpha_k) \Re\{e_k\} - \cos(\ubar\alpha_k) \Im\{e_k\} \le 0 \label{eq:crelax:1} \\
	\sin(\bar\alpha_k) \Re\{e_k\} - \cos(\bar\alpha_k) \Im\{e_k\} \ge 0 \label{eq:crelax:2} \\
	a_k \Re\{e_k\} + b_k \Im\{e_k\} \ge (d_k - t) (a_k^2 + b_k^2) \label{eq:crelax:3}
\end{gather}
\end{subequations}
and $(e_k, d_k) \in \mathds C \times \mathds R$ otherwise \cite[Prop.~1]{Lu2017},
where
	$a_k = \tfrac{1}{2} \left( \cos(\ubar\alpha_k) + \cos(\bar\alpha_k) \right)$, and
	$b_k = \tfrac{1}{2} \left( \sin(\ubar\alpha_k) + \sin(\bar\alpha_k) \right)$.

The resulting bounding problem
depends on $\vec\gamma_p$ and $s$ only through to the constraints \cref{eq:sitdual:1,eq:srmax:9,eq:srmaxeqiv:4}, and $(\vec\gamma_p, s, \vec\alpha)\in\mathcal M$. These can be transformed into affine functions of $(\vec\gamma_p, s)$ by substituting $s' = \log(1 + s)$ and $\gamma_{p,k}' = \log( 1 + \gamma_{p,k})$. Then, these constraints are equivalent to
\begin{subequations} \label{eq:sitfeas}
	\begin{gather}
	\sum_{k\in\mathcal{K}} u_k \left( C_k + \gamma_{p,k} \right) \ge \delta \Big( \mu \Big( \Vert \vec p_c \Vert^2 + \sum_{k\in\mathcal K} \Vert \vec p_k \Vert^2 \Big) + P_c \Big) \label{eq:sitfeas:1} \\
	\sum_{k\in \mathcal{K}}C_{k}\leq s,\quad
	C_k \geq \max \left\{ 0,\ R_k^{th} - \gamma_{p,k} \right\},\enskip \forall k\in\mathcal{K} \label{eq:sitfeas:2} \\
	s \in [\log(1+\ubar s), \log(1+\bar s)] \label{eq:sitfeas:3} \\
	\gamma_{p,k} \in [\log(1+\ubar{\gamma}_{p,k}), \log(1+\bar{\gamma}_{p,k})],\quad \forall k\in\mathcal K \label{eq:sitfeas:4}
	\end{gather}
\end{subequations}
and the final bounding problem is the \cgls{socp}
\begin{subequations} \label{eq:sitbndfirst}
	\begin{align}
		\min_{\smash[b]{\substack{\vec p_c, \vec p_1, \dots, \vec p_K,\\ \mathbf{c}, \vec \gamma_p, s, \vec d, \vec e, t}}}{} & t \\
	\mbox{s.t.}\quad
	& \sqrt{\ubar{\gamma}_{p,k}} \smash[t]{\bigg( \sum\limits_{j\in\mathcal{K}\setminus k}|\mathbf{h}_{k}^{H}\mathbf{p}_{j}|^2+1 \bigg)^{1/2}}\!\!\! \le t + {\mathbf{h}}_{k}^{H}\mathbf{p}_{k}  \label{eq:sitbndfirst:1} \\
	& \sqrt{\ubar{s}} \left( \sum\nolimits_{j\in\mathcal{K}}|\mathbf{h}_{1}^{H}\mathbf{p}_{j}|^2+1 \right)^{1/2} \le t+{\mathbf{h}}_{1}^{H}\mathbf{p}_{c}  \label{eq:sitbndfirst:2} \\
	& \sqrt{\ubar{s}} \bigg( \sum\limits_{j\in\mathcal{K}}|\mathbf{h}_{k}^{H}\mathbf{p}_{j}|^2+1 \bigg)^{1/2} \le t+d_k, \forall k > 1 \label{eq:sitbndfirst:3} \\
	& \forall k\in\mathcal I_{\mathcal M} : \text{\cref{eq:crelax:1,eq:crelax:2,eq:crelax:3}} \label{eq:sitbndfirst:4} \\
	& \text{\cref{eq:srmax:4,eq:srmax:5,eq:srmax:7,eq:srmaxequiv:6,eq:sitfeas:1,eq:sitfeas:2,eq:sitfeas:3,eq:sitfeas:4}}
	\end{align}
\end{subequations}
where
$\mathcal I_{\mathcal M} = \big\{ k\in\mathcal K : k > 1 \land \max\limits_{\ubar{\alpha}, \bar{\alpha} \in \mathcal M} | \bar{\alpha}_k - \ubar{\alpha}_k | \le \pi \big\}$.
The bound $\beta(\mathcal M)$ takes the optimal value of \cref{eq:sitbndfirst} if it is feasible. Otherwise, $\beta(\mathcal M) = \infty$ otherwise.

\subsection{Feasible Point} \label{sec:sit:feas}
A dual feasible point is obtained from the solution $(\vec\gamma_p^\star, s^\star, \vec e^\star, \dots)$ of \cref{eq:sitbndfirst}
 as $(\vec\gamma_p^k, s^k, \vec\alpha^k)$ with
$\gamma_{p,i}^k = 2^{\gamma_{p,i}^\star}-1$, for $i\in\mathcal K$, $s^k = 2^{s^\star}-1$
and $\vec\alpha^k\in\proj_{\vec\alpha} \mathcal M_k = [\ubar{\vec\alpha}^k, \bar{\vec\alpha}^k]$. Numerical experiments show that the obvious choice 
$\alpha_i^k = \angle e_i^\star$ leads to very slow convergence. A much faster alternative is
$\alpha^k_i = \argmin_{\alpha\in\{\ubar{\alpha}_i^k, \bar{\alpha}_i^k\}} |\alpha - \angle e_i^\star|$.
This point is primal feasible if the optimal value of
\begin{subequations} \label{eq:sitg}
	\begin{align}
		\min_{\smash[b]{\substack{\vec p_1, \dots, \vec p_K,\\ \vec p_c, \mathbf{c}, \vec d, \vec e, t}}} {} & t
		&\mbox{s.t.}\quad
		&\text{\cref{eq:sitdual:1}, \cref{eq:srmax:4,eq:srmax:5,eq:srmax:7,eq:srmax:9,eq:srmax:10}} \vert_{\vec\gamma_p = \vec\gamma_p^k, s = s^k} \\
		&&&\text{\cref{eq:sitbndfirst:1,eq:sitbndfirst:2,eq:sitbndfirst:3}}\vert_{\ubar{\gamma}_p = \vec\gamma_p^k, \ubar s = s^k} \\
		&&&\forall i > 1: (e_i, d_i - t) \in \mathcal C,\ \angle e_i = \alpha_i^k \label{eq:sitg:3}
	\end{align}
\end{subequations}
is less than or equal to zero. This is an \cgls{socp} since \cref{eq:sitg:3} is affine. 

Denote the optimal solution of \cref{eq:sitg} as $(t^*, \vec c^*, \vec y^*)$. 
It can be shown that the primal objective value of $(\vec c^*, \vec y^*)$ is greater than or equal to $\delta$. This value can be further increased without impairing primal feasibility by updating $\vec c^*$ with the solution of the \cgls{lp}
	$\max_{\vec c} \sum_{k\in\mathcal K} u_k C_k \enskip\st\enskip \text{\cref{eq:sitdual:1,eq:srmax:9,eq:srmaxeqiv:4}} \vert_{\vec y^*}$.

\subsection{Reduction Procedure} \label{sec:sit:red}
The convergence criterion \cref{eq:consistent} implies that the quality of the bound $\beta(\mathcal M)$ improves as the diameter of $\mathcal M$ shrinks. Since tighter bounds lead to faster convergence, it is beneficial to reduce the size of $\mathcal M$ prior to bounding if possible at low computational cost. To ensure convergence to the global solution, it is important that the reduced box $\mathcal M' \subseteq \mathcal M$ still contains all solution candidates.

Consider the box $\mathcal M = [\ubar{\vec\gamma}_p, \bar{\vec\gamma}_p]\times[\ubar s, \bar s]\times[\ubar{\vec\alpha}, \bar{\vec\alpha}]$. Due to monotonicity, a necessary condition for the feasibility of \cref{eq:sitdual} over $\mathcal M$ is that \cref{eq:sitdual:1,eq:srmax:9,eq:srmaxeqiv:4} hold for $\bar{\vec\gamma}_p, \bar s, \bar{\vec \alpha}$.
Clearly, \cref{eq:srmax:9,eq:srmaxeqiv:4} can only hold if
\begin{equation}
	\sum\nolimits_{k\in \mathcal{I}} \left( R_k^{th} - \log(1+ \bar\gamma_{p,k}) \right) - \log(1+ \bar s)\le 0 \label{eq:feas4}
\end{equation}
with $\mathcal I = \{ k \in\mathcal K : R_k^{th} - \log(1+ \bar\gamma_{p,k}) > 0 \}$.
Similarly,
a necessary condition for \cref{eq:sitdual:1} to hold is
\begin{equation}
	\max_{k\in\mathcal K} \{ u_k \} \log(1 + \bar s) + \sum\nolimits_{k\in\mathcal{K}} u_k \log(1+ \bar\gamma_{p,k})
	\ge  \delta W \label{eq:feas5}
\end{equation}
with $W = \left( \mu \left( \min \Vert \vec p_c \Vert^2 + \sum_{k\in\mathcal K} \min \Vert \vec p_k \Vert^2 \right) + P_c \right)$,
where the minimum is such that $\vec\gamma_p\in\mathcal M$. This can be relaxed as
$\min_{\vec p_c, \dots, \vec p_K} \Vert \vec p_\kappa \Vert^2 \quad\st\quad \ubar\gamma_{p,\kappa} \le |{\mathbf{h}}_{\kappa}^{H}\mathbf{p}_{\kappa}|^2.$
From the \cgls{kkt} conditions, the optimal value of this problem is obtained as $\ubar{\gamma}_{p,\kappa}\Vert \vec h_{\kappa} \Vert^{-2}$.
Similarly, a lower bound for $\min\Vert\vec p_c\Vert^2$ is obtained as $\ubar s \max_k \Vert \vec h_k \Vert^{-2}$. Hence, 
\begin{equation} \label{eq:redW}
	W = \mu \Big( \ubar s \max_k \Vert \vec h_k \Vert^{-2} + \sum\nolimits_{k\in\mathcal K} \ubar{\gamma}_{p,k}\Vert \vec h_{k} \Vert^{-2} \Big) + P_c.
\end{equation}

Conditions \cref{eq:feas5,eq:feas4} can be used to reduce $\mathcal M$ and as a preliminary feasibility check before bounding.
For the reduction, let
$\mathcal M' =  [\ubar{\vec\gamma}_p', \bar{\vec\gamma}_p']\times[\ubar s', \bar s']\times[\ubar{\vec\alpha}, \bar{\vec\alpha}]$
and consider \cref{eq:feas5}. Every dual feasible $\gamma_{p,\kappa}\in\mathcal M$ satisfies
	$W \delta
	\le U - u_\kappa \log(1+\bar\gamma_{p,\kappa})  + u_\kappa \log(1+ \gamma_{p,\kappa})$,
where $U$ is the \cgls{rhs} of \cref{eq:feas5}. Hence, every dual feasible $\gamma_{p,\kappa}$ satisfies $\gamma_{p,\kappa} \ge 2^{\frac{W \delta - U}{u_\kappa}} (1+ \bar\gamma_{p,\kappa}) - 1$. Similarly, let $V$ be the \cgls{lhs} of \cref{eq:feas4}. From this condition, we see that every dual feasible $\gamma_{p,\kappa}$ satisfies
$\gamma_{p,\kappa} \ge 2^V (1+ \bar\gamma_{p,\kappa}) - 1$,  for $\kappa\in\mathcal I$, and $\gamma_{p,\kappa} \ge 2^{V + R_\kappa^{th}} - 1$, for $\kappa\notin\mathcal I$.
Thus, the lower bound for $\gamma_{p,k}$ can be reduced to $\ubar{\gamma}_{p,k}' = \max\{ \ubar{\gamma}_{p,k}, \ubar{\gamma}_{p,k}'' \}$
without losing feasible solution candidates,
where $\ubar{\gamma}_{p,k}'' = 2^{\max\{ \frac{W \delta - U}{u_k},\ V\}} (1+ \bar\gamma_{p,k}) - 1$ if $k\in\mathcal I$, and $\max\{ \ 2^{\frac{W \delta - U}{u_k}} (1+ \bar\gamma_{p,\kappa}), 2^{V+R_k^{th}}  \} - 1$ otherwise.
Likewise, the lower bound $s$ can be reduced to
	$\ubar s' = \max\{ \ubar s,\ 2^{\max\big\{\frac{W \delta - U}{\max_{k\in\mathcal K} \{u_k\}},\ V\big\}} (1+ \bar s) - 1 \}$.

Let $W'$ be as in \cref{eq:redW}, evaluated at $(\ubar{s}', \ubar{\vec \gamma}_{p}')$, and consider \cref{eq:feas5} again. With a similar argument as before, the upper bound of $\mathcal M'$ can be reduced to
$\bar{\gamma}_{p,k}' = \min\big\{ \bar{\gamma}_{p,k},\enskip \ubar{\gamma}_{k,p}' + (\delta\mu)^{-1} \Vert \vec h_k\Vert^2 ( U - \delta W') \big\}$ and $\bar{s}' = \min\big\{ \bar s,\enskip \ubar{s}' + (\delta\mu)^{-1} \min_k \Vert \vec h_k\Vert^2 ( U - \delta W') \big\}$.
Observe that this reduction procedure may lead to $\mathcal M' = \emptyset$.

\subsection{Algorithm and Convergence}
The complete algorithm is stated in \cref{alg:sitbb}. It is essentially a \cgls{brb} procedure \cite{Tuy2016,diss} that solves the \cgls{sit} dual of \cref{eq:srmax} and updates the constant $\delta$ whenever a primal feasible point is encountered.

The initial box in \ref{alg:sitbb:init} is computed as $\mathcal M_0 = [\vec 0, \bar{\vec\gamma}_p^0] \times [0, \bar s^0] \times [0, 2\pi]^{K-1}$ with $\bar\gamma_{p,k} = P \Vert \vec h_k \Vert^2$ and $\bar s = \min_{k\in\mathcal K} P \Vert \vec h_k \Vert^2$. The set $\mathscr R_k$ holds the current partition of the feasible set, $\delta_k$ is the \cgls{cbv} adjusted by the tolerance $\eta$, and $\bar{\vec x}^k$ is the \cgls{cbs}.
In \ref{alg:sitbb:branch}, the next box is selected as $\mathcal M_k$ and bisected into $\mathscr P_k$. These boxes are reduced according to \cref{sec:sit:red} in \ref{alg:sitbb:reduction}. In \ref{alg:sitbb:bounding}, bounds for each reduced box are computed, infeasibility is detected, and dual feasible points are obtained from the bounding problem. For each of these points, primal feasibility is checked in \ref{alg:sitbb:feaspoint}. If feasible, a feasible point is recovered as in \cref{sec:sit:feas} and the corresponding primal objective value is computed. If necessary, the \cgls{cbs} and $\delta_k$ are updated in \ref{alg:sitbb:incumbent}. Boxes that cannot contain primal $\varepsilon$-essential feasible solutions are pruned in \ref{alg:sitbb:prune}. The algorithm is terminated in \ref{alg:sitbb:terminate}.
\begin{theorem} \label{thm:sitbb}
	Alg.~\ref{alg:sitbb} converges in finitely many steps to a $(\varepsilon, \eta)$-optimal solution of \cref{eq:srmax} or establishes that no such solution exists.
\end{theorem}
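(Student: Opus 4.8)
The plan is to verify that the bounding, feasible-point, and reduction procedures of \cref{sec:sit:bounding,sec:sit:feas,sec:sit:red} satisfy exactly the two hypotheses under which \cref{lem:conv} applies—a bounding function obeying the consistency condition \eqref{eq:consistent} and an available dual feasible point—and then to combine the per-incumbent dichotomy of \cref{lem:conv} with a finiteness argument for the incumbent updates. By \cref{prop:srmax} the optimal value and the solutions of \eqref{eq:srmax} coincide with those of \eqref{eq:srmaxequiv}, so it suffices to argue convergence for the SIT dual \eqref{eq:sitdual}.

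First I would show that $\beta(\mathcal M)$, the optimal value of the SOCP \eqref{eq:sitbndfirst}, is a valid lower bound on $\min\{\max_i g_i : (\vec x,\vec\xi)\in\mathcal F,\ \vec x\in\mathcal M\}$ and is consistent in the sense of \eqref{eq:consistent}. Validity holds because \eqref{eq:sitbndfirst} relaxes \eqref{eq:sitdual} restricted to $\mathcal M$: the objective \eqref{eq:sitdual:0} is increasing in $(\vec\gamma_p,s)$, so the corner substitution $\vec\gamma_p=\ubar{\vec\gamma}_p$, $s=\ubar s$ underestimates it, and each nonconvex constraint $(e_k,d_k-t)\in\mathcal C$ is enlarged to the convex envelope \eqref{eq:crelax} over $[\ubar\alpha_k,\bar\alpha_k]$. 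Consistency is the technical heart of the argument, and I would establish it by controlling the two relaxation errors separately: continuity of \eqref{eq:sitdual:0} bounds the corner-substitution error by the widths $\bar s-\ubar s$ and $\bar\gamma_{p,k}-\ubar\gamma_{p,k}$, while \cite[Prop.~1]{Lu2017} guarantees that the envelope \eqref{eq:crelax} collapses onto $\mathcal C\cap\{\angle e_k=\alpha_k\}$ as $\bar\alpha_k-\ubar\alpha_k\to0$. Since every edge width of $\mathcal M$ tends to zero together with its diameter, $\beta(\mathcal M)$ approaches the exact restricted minimum, which is \eqref{eq:consistent}.

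Next I would confirm the remaining ingredients. The recovery in \cref{sec:sit:feas} yields a genuine dual feasible point $\vec x^k\in\proj_{\vec x}\mathcal F\cap\mathcal M_k$—the log-substitution renders the $\gamma_p$- and $s$-dependent constraints affine, the angle $\alpha_i^k$ is chosen inside $[\ubar\alpha_i^k,\bar\alpha_i^k]$, and primal feasibility is certified exactly by the sign of the optimum of \eqref{eq:sitg}, with the final LP only raising the objective without destroying feasibility. I would also check that the reduction of \cref{sec:sit:red} discards only points violating the necessary conditions \eqref{eq:feas4}, \eqref{eq:feas5} for dual feasibility, so that $\mathcal M'\subseteq\mathcal M$ keeps every solution candidate and neither bound nor fathoming is compromised. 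Assuming the longest-edge bisection is exhaustive, so that diameters vanish along any infinite nested sequence of boxes, consistency together with dual feasibility lets \cref{lem:conv} apply for each fixed $\delta$: after finitely many bisections either an improving primal feasible point with $f\ge\delta$ appears, or $\beta(\mathcal M_k)>-\varepsilon$ certifies that no $\varepsilon$-essential feasible point with $f\ge\delta$ exists.

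It remains to close the outer loop. The feasible set of \eqref{eq:srmax} is compact—the beamformers are bounded through \eqref{eq:srmaxequiv:6} and $(\vec\gamma_p,s)$ lie in the bounded initial box $\mathcal M_0$—and the objective is continuous with denominator at least $P_c>0$, so the optimum is finite. Each incumbent update strictly raises $\delta$, and the tolerance $\eta>0$ forces every accepted improvement to exceed the previous value by a fixed margin; since $\delta$ cannot surpass the finite optimum, only finitely many updates can occur. Combining finitely many incumbent-transcending phases with the finite termination of each phase from \cref{lem:conv}, the algorithm halts after finitely many iterations and returns either the current incumbent, which the second case of \cref{lem:conv} then certifies to be $(\varepsilon,\eta)$-optimal, or the statement that no $\varepsilon$-essential feasible solution exists. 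I expect the verification of \eqref{eq:consistent} for $\beta$—specifically the vanishing of the argument-cut envelope gap—to be the main obstacle; the feasible-point and reduction steps are routine, and the outer finiteness reduces to the standard bounded-optimum-plus-tolerance argument.
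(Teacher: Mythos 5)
The paper gives no proof of \cref{thm:sitbb} (it is ``omitted due to space constraints''), so there is nothing to compare against literally; judged against the argument that \cref{sec:alg} is visibly assembling, your proposal follows exactly the intended route: verify that the bounding of \cref{sec:sit:bounding} satisfies \eqref{eq:consistent}, that \cref{sec:sit:feas} supplies dual feasible points and certifies primal feasibility, that the reduction of \cref{sec:sit:red} only removes dual-infeasible points, then invoke \cref{lem:conv} per incumbent phase and bound the number of phases. The skeleton is sound and complete at the level of a sketch.

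One step is justified by the wrong reason, though the conclusion is recoverable. You write that ``the tolerance $\eta>0$ forces every accepted improvement to exceed the previous value by a fixed margin.'' The acceptance test in \ref{alg:sitbb:incumbent} is $f(\mathcal M') > \delta_{k-1} - \eta$, which by itself only yields $\delta_k = f(\mathcal M') + \eta > \delta_{k-1}$, i.e., strict increase with no uniform margin --- and a strictly increasing bounded sequence can still be updated infinitely often. The fixed margin $\delta_k \ge \delta_{k-1} + \eta$ instead comes from the fact that every recovered primal feasible point satisfies \eqref{eq:sitdual:1} with $\delta = \delta_{k-1}$ (it is a constraint of \eqref{eq:sitg}, preserved by the subsequent LP refinement), so $f(\mathcal M') \ge \delta_{k-1}$ whenever an update fires. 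You state this fact elsewhere (``an improving primal feasible point with $f\ge\delta$ appears'') but do not connect it to the finiteness of the incumbent updates; making that connection explicit closes the only real hole in the outer-loop argument. The remaining items you flag as the ``technical heart'' (vanishing of the argument-cut envelope gap, exhaustiveness of longest-edge bisection) are indeed where the detailed work lies, and your plan for them is the standard and correct one.
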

\vspace{-5mm}
\begin{proof}
	Omitted due to space constraints.
\end{proof}

\begin{algorithm}[tb]
\renewcommand{\crefrangeconjunction}{--}
	\caption{\cGls{sit} Algorithm for \cref{eq:srmax}}\label{alg:sitbb}
	\footnotesize
	\centering
	\begin{minipage}{\linewidth-1em}
	\begin{enumerate}[label=\textbf{Step \arabic*},ref=Step~\arabic*,start=0,leftmargin=*]
		\item\label{alg:sitbb:init} {\bfseries (Initialization)} Set $\varepsilon, \eta > 0$. 
			Let $k=1$ and $\mathscr R_0 = \{ \mathcal M_0 \}$.
		If an initial feasible solution $\vec y^0 = (\vec p_c^0, \dots, \vec p_K^0)$ is available, set $\delta_0 = \eta + v\eqref{eq:srmaxequiv}|_{\vec y_0}$ and initialize $\bar{\vec x}^0 = (\vec\gamma_p^0, s^0, \vec\alpha^0)$ from \cref{eq: SINR cp}, $s^0 = \min_{k\in\mathcal K} \gamma_{c,k}^0$, and $\alpha^0_k = \angle \vec h_k^H \vec p_c^0$.
			Otherwise, do not set $\bar{\vec x}^0$ and choose $\delta_0 = 0$.
		\item\label{alg:sitbb:branch} {\bfseries (Branching)} Let $\mathcal M_k = [\vec r^k, \vec s^k] \in \argmin\{\beta(\mathcal M) \,|\, \mathcal M\in\mathscr R_{k-1} \}$. Bisect $\mathcal M_k$ into
			\begingroup
			\setlength\abovedisplayskip{2pt plus 2.0pt}
			\setlength\belowdisplayskip{2pt plus 2.0pt}
			\begin{align*}
				\mathcal M^- &= \{ \vec x  : r_j \le x_j \le v_j,\ r_i \le x_i \le s_i\ (i\neq j) \} \\
				\mathcal M^+ &= \{ \vec x : v_j \le x_j \le s_j,\ r_i \le x_i \le s_i\ (i\neq j) \}
			\end{align*}
			\endgroup
			where $j_k \in \argmax_j s^k_j - r^k_j$ and $\vec v^k = \frac{1}{2} (\vec s^k + \vec r^k)$. Set $\mathscr P_{k} = \{\mathcal M^k_-, \mathcal M^k_+\}$.
		\item\label{alg:sitbb:reduction} {\bfseries (Reduction)} Replace each box in $\mathscr P_k$ with $\mathcal M'$ as in \cref{sec:sit:red}.
		\item\label{alg:sitbb:bounding} {\bfseries (Bounding)} For each reduced box $\mathcal M\in\mathscr P_k$, solve \cref{eq:sitbndfirst}. If infeasible, set $\beta(\mathcal M) = \infty$. Otherwise, set $\beta(\mathcal M)$ to the optimal value of \cref{eq:sitbndfirst} and obtain a dual feasible point $\vec x(\mathcal M)$ as in \cref{sec:sit:feas}.
		\item\label{alg:sitbb:feaspoint} {\bfseries (Feasible Point)} For each $\mathcal M\in\mathscr P_k$, if $\beta(\mathcal M) \le 0$ solve \cref{eq:sitg} for $\vec x(\mathcal M)$ and denote the optimal value as $t(\vec x(\mathcal M))$. If $t(\vec x(\mathcal M)) \le 0$, $\vec x(\mathcal M)$ is primal feasible. Recover $\vec x'(\mathcal M)$ from the solution of \cref{eq:sitg} with $\vec\gamma_p'$, $s'$ as in \ref{alg:sitbb:init} and $\alpha_k' = \angle e_k^*$, $k > 1$, where $\vec e^*$ is from the optimal solution of \cref{eq:sitg}. Update $\vec c^*$ as in \cref{sec:sit:feas} and compute the primal objective value $f(\mathcal M)$. If $\beta(\mathcal M) > 0$ or $t(\vec x(\mathcal M)) > 0$, set $f(\mathcal M) = -\infty$.
		\item\label{alg:sitbb:incumbent} {\bfseries (Incumbent)} Let $\mathcal M' \in \argmin\{ f(\mathcal M) : \mathcal M\in\mathscr P_k \}$. If $f(\mathcal M') > \delta_{k-1} - \eta$, set $\bar{\vec x}^k = \vec x'(\mathcal M')$ and $\delta_k = f(\mathcal M') + \eta$. Otherwise, set $\bar{\vec x}^k = \bar{\vec x}^{k-1}$ and $\delta_k = \delta_{k-1}$.
		\item\label{alg:sitbb:prune} {\bfseries (Pruning)} Delete every $\mathcal M\in\mathscr P_k$ with $\beta(\mathcal M) > -\varepsilon$ and collect the remaining sets in $\mathscr P_k'$. Set $\mathscr R_k = \mathscr P_k'\cup(\mathscr R_{k-1}\setminus\{\mathcal M_k\})$.
		\item\label{alg:sitbb:terminate} {\bfseries (Termination)} Terminate if $\mathscr R = \emptyset$: If $\bar{\vec x}^k$ is not set, then \cref{eq:srmax} is $\varepsilon$-essential infeasible; else $\bar{\vec x}^k$ is an essential $(\varepsilon, \eta)$-optimal solution of \cref{eq:srmax}. Otherwise, update $k\gets k+1$ and return to \ref{alg:sitbb:branch}.
	\end{enumerate}
	\end{minipage}
\end{algorithm}

\section{Numerical Evaluation} \label{sec:numeval}
As most numerical problems of similar state-of-the-art algorithms arise from the multiple unicast beamforming problem, i.e., where $\vec p_c = \vec 0$, we evaluate the performance of the algorithm for this case. In particular, we have generated 100 random i.i.d.\ channel realizations and solved \cref{eq:srmaxequiv} for $u_k = 1$, $\mu = 0$, $P_c = 0$, $R_{k}^{th} = 0$, $\frac{P}{\mathrm{dB}} = -10, -5, \dots, 20$, and $K = M \in \{ 2, 3, 4 \}$. This results in 700 problem instances per $K$. As baseline comparison and verification, we chose the straightforward \cgls{bb} implementation of this problem \cite{Bjornson2013,Tervo2015} (``BB'') and its variant with modified bounding problem from \cite[\S 2.2.2]{Bjornson2013} (``BB2''). For $K = 2$, BB2 stalled in 364 problem instances, while the other algorithms solved all problems. For $K = 3$, BB2 stalled 146 times and BB failed 13$\times$ due to numerical problems of the convex solver. Finally, for $K = 4$, BB did not solve a single problem instance due to numerical issues and BB2 stalled in 27 instances. Moreover, \cref{alg:sitbb} and BB2 did not solve the problem withing 60 minutes in 4 and 60 instances, respectively. Average computation times on a single core of an Intel Cascade Lake Platinum 9242 CPU are reported in \cref{tab}. It can be observed that the proposed \cref{alg:sitbb} is more efficient than the two baseline algorithms especially when more users are in the system.
Moreover, the joint beamforming problem, i.e., with $\vec p_c \neq \vec 0$, was solved by \cref{alg:sitbb} for $K = 2$ with mean and median run times of \SI{942}{\second} and \SI{2786}{\second}. However, 23 instances were not solved within 12 hours.

Observe from the discussion in \cref{sec:alg} that the complexity scales with $O(\exp(2 K))$ in the number of users and polynomially in the number of antennas $M$. Hence, no noticeable changes in the reported run times are to be expected by varying $M$.

\begin{table}[tb]
	\centering
	\footnotesize
	\begin{tabular}{lccc}
		\toprule
		 & $K = 2$ & $K = 3$ & $K = 4$\\
		 \midrule
		 Alg.~\ref{alg:sitbb} & \SI{0.175}{\second} / \SI{0.099}{\second} & \SI{4.579}{\second} / \SI{1.959}{\second} & \SI{334.8}{\second} / \SI{126.3}{\second} \\
		 BB  & \SI{0.173}{\second} / \SI{0.091}{\second} & \SI{7.605}{\second} / \SI{2.606}{\second} & --- \\
		 BB2 & \SI{42.41}{\second} / \SI{2.380}{\second} & \SI{158.5}{\second} / \SI{12.42}{\second} & \SI{704.1}{\second} / \SI{265.8}{\second}\\
		\bottomrule
	\end{tabular}
	\caption{Mean / median run times to obtain the optimal solution. Problem instances where not all algorithms converged are ignored.}
	\label{tab}
\end{table}

\section{Conclusions}
We developed the first global optimization algorithm to solve \cgls{miso} downlink beamforming for \cgls{rsma} with respect to \cgls{wsr} and \cgls{ee} maximization. This problem is an instance of joint multicast and unicast beamforming and also solves these problems separately. The algorithm is numerically stable and outperforms state-of-the-art multiple unicast beamforming algorithms considerably.

\clearpage
\balance
\bibliographystyle{IEEEbib}
\bibliography{IEEEabrv,bibliography}
\end{document}